\newcommand\R{{\ensuremath {\mathbb R} }}
\newcommand\C{{\ensuremath {\mathbb C} }}
\newcommand\N{{\ensuremath {\mathbb N} }}
\newcommand\Z{{\ensuremath {\mathbb Z} }}
\newcommand\1{{\ensuremath {\mathds 1} }}
\newcommand\cZ{\mathcal{Z}}
\renewcommand\phi{\varphi}
\newcommand{\cP}{\mathcal{P}}
\newcommand{\cN}{\mathcal{N}}
\newcommand{\cB}{\mathcal{B}}
\newcommand{\cF}{\mathcal F}
\newcommand{\cM}{\mathcal M}
\newcommand{\cR}{\mathcal R}
\newcommand{\tr}{{\rm tr}\,}
\renewcommand{\epsilon}{\varepsilon}
\newcommand{\qed}{\hfill$\square$}
\newcommand\ii{{\ensuremath {\infty}}}
\newcommand\pscal[1]{{\ensuremath{\left\langle #1 \right\rangle}}}
\newcommand{\norm}[1]{ \left| \! \left| #1 \right| \! \right| }
\begin{document}
%
%
%
%

\wstoc{The Thermodynamic Limit of Quantum Coulomb Systems: A New Approach}{C. Hainzl, M. Lewin and J.P. Solovej}

\title{\large\sf The Thermodynamic Limit of Quantum Coulomb Systems: A New Approach\footnotetext{\textnormal Talk given by M.L. at QMath10, 10th Quantum Mathematics International Conference, Moeciu, Romania, September 10--15, 2007. \copyright\,\textit{2008 by the authors. This work may be reproduced, in its entirety, for non-commercial purposes.}}}

\author{Christian HAINZL}
\aindx{Hainzl, Ch.}
\address{Department of Mathematics, UAB\\ 
Birmingham, AL 35294-1170 - USA\\
Email:\texttt{hainzl@math.uab.edu}}

\author{Mathieu LEWIN}
\address{CNRS and Laboratoire de Mathématiques (UMR 8088)\\
Universit{\'e} de Cergy-Pontoise\\
2, avenue Adolphe Chauvin\\
 95302 Cergy-Pontoise Cedex - France\\
Email: \texttt{Mathieu.Lewin@math.cnrs.fr}}

\author{Jan Philip SOLOVEJ}
\address{University of Copenhagen\\ Department of Mathematics\\
Universitetsparken 5\\
 2100 Copenhagen \O - Denmark\\
Email: \texttt{solovej@math.ku.dk}}

\begin{abstract}
We present recent works \cite{1,2} on the thermodynamic limit of quantum Coulomb systems. We provide a general method which allows to show the existence of the limit for many different systems. 
\end{abstract}

\bigskip

Ordinary matter is composed of electrons (negatively charged) and nuclei (positively charged) interacting via Coulomb forces. The potential between two particles of charges $z$ and $z'$ located at $x$ and $x'$ in $\R^3$ is
$$\frac{zz'}{|x-x'|}.$$
There are two difficulties which occur when trying to describe systems composed of electrons and nuclei. Both have to do with the physical problem of \emph{stability} of quantum systems.

The first is due to the singularity of $1/|x|$ at $0$: it is necessary to explain why a particle will not rush to a particle of the opposite charge. One of the first major triumphs of the theory of quantum mechanics is the explanation it gives of the stability of the hydrogen atom (and the complete description of its spectrum) and of other microscopic quantum Coulomb systems, via the uncertainty principle. 
Stability means that the total energy of the considered system cannot be arbitrarily
negative.  If there was no such lower bound to the energy it would be possible in principle to extract
an infinite amount of energy. One often refers to this kind of stability as {\it stability of the first kind} \cite{Lieb1,Lieb2}.
If we denote by $E(N)$ the ground state energy of the system under consideration, for $N$ particles stability of the first kind can be written
\begin{equation}
 E(N)>-\ii.
\label{stability_first_kind}
\end{equation}

In proving \eqref{stability_first_kind} for Coulomb systems, a major role is played by the uncertainty principle which for nonrelativistic systems is mathematically expressed by  the critical Sobolev embedding $H^1(\mathbb{R}^3)\hookrightarrow L^6(\mathbb{R}^3)$. The latter allows to prove Kato's inequality 
$$\forall\epsilon>0,\qquad\frac1{|x|}\leq \epsilon(-\Delta)+\frac{1}\epsilon,$$
which means that the Coulomb potential is controlled by the kinetic energy.

The second issue concerns the slow decay of $1/|x|$ at infinity and this has to do with the macroscopic behavior of quantum Coulomb systems. 
It is indeed necessary to explain how a very large number of electrons and nuclei can stay bounded together to form macroscopic systems, although each particle interacts with a lot of other charged particles due to the long tail of the Coulomb interaction potential.
Whereas the stability of atoms was an early triumph of quantum mechanics it, surprisingly, took nearly forty years before the question of stability of everyday macroscopic objects was even raised (see Fisher and Ruelle \cite{fr}). A rigorous answer to the question came shortly thereafter in what came to be known as the Theorem on Stability of Matter proved first by Dyson and Lenard \cite{DL1}. 

The main question is how the lowest possible energy $E(N)$ appearing  in \eqref{stability_first_kind} depends on the (macroscopic) number $N$ of particles in the object.
More precisely, one is interested in proving a behavior of the form
\begin{equation}
 E(N)\sim_{N\to\ii}\bar{e}N.
\label{thermo_lim_N_intro}
\end{equation}
This behavior as the number of particles grows is mandatory to explain why matter does not collapse or explode in the thermodynamic limit.
Assume that \eqref{thermo_lim_N_intro} does not hold and that for instance $E(N)\sim_{N\to\ii}cN^p$ with $p\neq1$. Then $|E(2N)-2E(N)|$ becomes very large as $N\gg1$. Depending on $p$ and the sign of the constant $c$, a very large amount of energy will be either released when two identical systems are put together, or necessary to assemble them.
The constant $\bar{e}$ in \eqref{thermo_lim_N_intro} is the energy per particle.

Stability of Matter is itself a necessary first step towards a proof of \eqref{thermo_lim_N_intro} as it can be expressed by the lower bound
\begin{equation}
 E(N)\geq -\kappa N.
\label{stability_of_matter_N_intro}
\end{equation}
Put differently, the lowest possible energy calculated per particle
cannot be arbitrarily negative as the number of particles increases.
This is also often referred to as {\it stability of the second kind} \cite{Lieb1,Lieb2}.

A maybe more intuitive notion of stability would be to ask for the
volume occupied by a macroscopic object (in its ground state). Usually this volume is proportional to the number of particles $N$. 
Denoting by $\Omega$ a domain in $\R^3$ which is occupied by the system under consideration and by $E(\Omega)$ its (lowest possible) energy, \eqref{thermo_lim_N_intro} then reads
\begin{equation}
 E(\Omega)\sim_{|\Omega|\to\ii}\bar{e}|\Omega|
\label{thermo_lim_V_intro}
\end{equation}
where $|\Omega|$ is the volume of $\Omega$. Stability of the second kind is expressed as
\begin{equation}
 E(\Omega)\geq -\kappa|\Omega|.
\label{stability_of_matter_V_intro}
\end{equation}

Instead of the ground state energy, one can similarly consider the free energy $F(\Omega,\beta,\mu)$ at temperature $T=1/\beta$ and chemical potential $\mu$. One is then interested in proving the equivalent of \eqref{thermo_lim_V_intro}
\begin{equation}
 F(\Omega,\beta,\mu)\sim_{|\Omega|\to\ii}\bar{f}(\beta,\mu)|\Omega|
\label{thermo_lim_V_intro2}
\end{equation}
where $\bar{f}(\beta,\mu)$ is the free energy per unit volume. 

Large quantum Coulomb systems have been the object of an important investigation in the last decades and many techniques have been developed.  A result like \eqref{stability_of_matter_N_intro} (or equivalently \eqref{stability_of_matter_V_intro}) was first proved for quantum electrons and nuclei by Dyson and Lenard \cite{DL1}. 
After the original proof by Dyson and Lenard several other proofs were given.  Lieb and Thirring~\cite{LT} in particular presented an elegant and simple proof relying on an {uncertainty principle for fermions}. The different techniques and results concerning stability of matter were reviewed in several articles \cite{Lieb1,Lieb2,l-heisen,Loss,Solovej_rev}.

It is very important that the negatively charged particles (the electrons) are fermions. It was discovered by Dyson \cite{dyson1} that the Pauli exclusion principle is essential for Coulomb systems: charged bosons are alone not stable because their ground state energy satisfies $E(N)\sim -CN^{7/5}$, as was proved later \cite{CLY1,LiSo,Solovej3}.

A result like \eqref{thermo_lim_N_intro} (or equivalently \eqref{thermo_lim_V_intro}) was first proved by Lieb and Lebowitz \cite{LL} for a system containing electrons and nuclei both considered as quantum particles, hence invariant by rotation. Later Fefferman gave a different proof \cite{F} for the case where the nuclei are classical particles placed on a lattice, a system which is not invariant by rotation.

In a recent work \cite{1,2}, we provide a new insight in the study of the thermodynamic limit of quantum systems, by giving a general proof of \eqref{thermo_lim_V_intro} or \eqref{thermo_lim_V_intro2} which can be applied to many different quantum systems including those studied by Lieb and Lebowitz \cite{LL} or Fefferman \cite{F}, and others which were not considered before. Our goal was to identify the main general physical properties of the free energy which are sufficient to prove the existence of the thermodynamic limit. However, for the sake of simplicity we will essentially address the crystal case in this paper and we refer to our works \cite{1,2} for a detailed study of the other cases.

In proving the existence of the thermodynamic limit of Coulomb quantum systems, the most difficult task is to quantify \emph{screening}. Screening means that matter is arranged in such a way that it is essentially locally neutral, hence the electrostatic potential created by any subsystem decays much faster than expected. This effect is the main reason of the stability of ordinary matter but it is very subtle in the framework of quantum mechanics because the particles are by essence delocalized. In our approach, we shall heavily rely on an electrostatic inequality which was proved by Graf and Schenker \cite{GS,G} and which serves as a way to quantify screening. It was itself inspired by previous works of Conlon, Lieb and Yau \cite{CLY1,CLY2}, for systems interacting with the Yukawa potential. Fefferman used a similar idea in his study of the crystal case \cite{F}.

Like in previous works, our method consists in first showing the existence of the limit \eqref{thermo_lim_V_intro2} for a specific domain $\triangle$ which is dilated (and possibly rotated and translated). Usually $\triangle$ is chosen to be a ball, a cube or a tetrahedron. In the applications \cite{2} we always choose a tetrahedron as we shall use the Graf-Schenker inequality \cite{GS} which holds for this type of domains. 
The second step consists in showing the existence of the limit \eqref{thermo_lim_V_intro2} for any (reasonable) sequence of domains $\{\Omega_n\}$ such that $|\Omega_n|\to\ii$. This is important as in principle the limit could depend on the chosen sequence, a fact that we want to exclude for our systems. We shall specify later what a ``reasonable'' sequence is. Essentially some properties will be needed to ensure that boundary effects always stay negligible. 

It is to be noticed that our method (relying on the Graf-Schenker inequality) is primarily devoted to the study of quantum systems interacting through Coulomb forces. It might be applicable to other interactions but we shall not address this question here.

Proving a result like \eqref{thermo_lim_V_intro} or \eqref{thermo_lim_V_intro2} is only a first step in the study of the thermodynamic limit of Coulomb quantum systems. An interesting open problem is to prove the convergence of \emph{states} (or for instance of all $k$-body density matrices) and not only of energy levels. For the crystal case, convergence of the charge density or of the first order density matrix was proved for simplified models from Density Functional Theory or from Hartree-Fock theory \cite{LS,CLL1}. A result of this type was also proved for the Hartree-Fock approximation of no-photon Quantum Electrodynamics \cite{HLSo}. 

Another (related) open question is to determine the next order in the asymptotics of the energy in the presence of local perturbations. Assume for instance that the crystal possesses a local defect modelled by a local potential $V$ and denote the ground state energy in the domain $\Omega$ by $E^V(\Omega)$. Since $V$ is local, it does not contribute to the energy in the first order of the thermodynamic limit. One is then interested in proving a behavior like $E^V(\Omega)=E^0(\Omega)+f(V)+o(1)_{|\Omega|\to\ii}$. Such a result was recently proved for the \emph{reduced} Hartree-Fock model of the crystal with the exchange term neglected \cite{CDL}. This includes an identification of the function $f(V)$. This program was also tackled for the Hartree-Fock model (with exchange term) of no-photon Quantum Electrodynamics \cite{HLSo}.

The present paper is organized as follows. In the first section we introduce the model for the crystal and state our main theorem. In Section 2, we briefly describe two other quantum systems which we can treat using our method. Section 3 is devoted to the presentation of our new approach, in a quite general setting, together with hints on how it can be applied to the crystal case.

\section{The Crystal Case}\label{sec:def_crystal}
For simplicity, we put identical nuclei of charge $+1$ on each site of $\Z^3$. The results below can be generalized to any periodic system. Let $\Omega$ be a bounded open set of $\R^3$ and define the $N$-body Hamiltonian in $\Omega$ by
$$H^N_\Omega:=\sum_{i=1}^N-\frac{\Delta_{x_i}}{2}+V_\Omega(x_1,...,x_N),$$
where
$$V_\Omega(x)=\sum_{i=1}^N\sum_{R\in\Z^3\cap\Omega}\frac{-1}{|R-x_i|}+\frac12\sum_{1\leq i\neq j\leq N}\frac{1}{|x_i-x_j|}+\frac12\sum_{R\neq R' \in\Z^3\cap\Omega}\frac{1}{|R-R'|}.$$ 
Here $-\Delta$ is the \emph{Dirichlet Laplacian} on $\Omega$ (we could as well consider another boundary condition). The Hamiltonian $H^N_\Omega$ acts on $N$-body fermionic wavefunctions $\Psi(x_1,..,x_N)\in\bigwedge_1^N L^2(\Omega)$.
Stability of the first kind states that the spectrum of $H^N_\Omega$ is bounded from below:
$$E_\Omega^N=\inf_{\substack{\Psi\in\bigwedge_1^N H^1_0(\Omega),\\ \norm{\Psi}_{L^2}=1}}\pscal{\Psi,H^N_\Omega\Psi}=\inf\sigma_{\bigwedge_1^N L^2(\Omega)}(H^N_\Omega)>-\ii.$$
We may define the ground state energy in $\Omega$ by
\begin{equation}
E(\Omega):=\inf_{N\geq0}E_\Omega^N. 
\label{def_GS_energy}
\end{equation}

It is more convenient to express \eqref{def_GS_energy} in a grand canonical formalism. We define the (electronic) Fock space as
$$\cF_\Omega:=\C\oplus\bigoplus_{N\geq1}\bigwedge_1^N L^2(\Omega)$$
The grand canonical Hamiltonian is then given by $H_\Omega:=\bigoplus_{N\geq0} H^N_\Omega$ with the convention that $H^0_\Omega=(1/2)\sum_{R\neq R' \in\Z^3\cap\Omega}|R-R'|^{-1}\in\C$. The number operator reads $\cN:=\bigoplus_{N\geq0} N$. It is then straightforward to check that
$$E(\Omega) =  \inf\sigma_{\cF_\Omega}(H_\Omega) =  \inf_{\substack{\Gamma\in\cB(\cF_\Omega),\ \Gamma^*=\Gamma,\\ 0\leq \Gamma\leq 1,\ \tr_{\cF_\Omega}(\Gamma)=1.}}\tr_{\cF_\Omega}\left(H_\Omega\Gamma\right).$$
The free energy at temperature $1/\beta$ and chemical potential $\mu\in\R$ is defined by
\begin{eqnarray}
 F(\Omega,\beta,\mu) & := &  \inf_{\substack{\Gamma\in\cB(\cF_\Omega),\ \Gamma^*=\Gamma,\\ 0\leq \Gamma\leq 1,\ \tr_{\cF_\Omega}(\Gamma)=1.}}\left(\tr_{\cF_\Omega}((H_\Omega-\mu\cN)\Gamma)+\frac{1}{\beta}\tr_{\cF_\Omega}(\Gamma\log\Gamma)\right)\nonumber\\
 & = & -\frac1\beta \log\tr_{\cF_\Omega}\left[e^{-\beta(H_\Omega-\mu \cN)}\right].\label{def_Free_energy}
\end{eqnarray}
As explained in Introduction, our purpose is to prove that 
\begin{equation}
 E(\Omega)\sim_{|\Omega|\to\ii}\bar e|\Omega|\quad\text{and}\quad F(\Omega,\beta,\mu)\sim_{|\Omega|\to\ii}\bar f(\beta,\mu)|\Omega|
\label{goal}
\end{equation}
in an appropriate sense. The first important property of $E$ and $F$ is the stability of matter.
\begin{theorem}[Stability of Matter \cite{2}]\label{thm:stability}
There exists a constant $C$ such that the following holds:
$$E(\Omega)\geq -C|\Omega|,\qquad F(\Omega,\beta,\mu)\geq -C\left(1+\beta^{-5/2}+\max(0,\mu)^{5/2}\right)|\Omega|$$
for any bounded open set $\Omega\subset\R^3$ and any $\beta>0$, $\mu\in\R$.
\end{theorem}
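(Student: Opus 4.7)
The plan is to establish the zero-temperature bound $E(\Omega)\geq -C|\Omega|$ by the classical Lieb--Thirring route to Stability of Matter, and then to upgrade it to the free-energy estimate by a suitable splitting of $H_\Omega-\mu\cN$ coupled with a semiclassical bound on the free energy of an ideal Fermi gas confined to $\Omega$.

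\medskip

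\emph{Zero-temperature bound.} I would first invoke an electrostatic inequality of Baxter/Lieb--Yau type to reduce the many-body Coulomb potential to a one-body nearest-neighbor expression: since each nuclear charge equals $+1$ and the nuclei sit on $\Z^3$, one has
\[
V_\Omega(x_1,\ldots,x_N)\;\geq\;-C\sum_{i=1}^N\frac{1}{d(x_i,\Z^3)},
\]
uniformly in $N$ and $\Omega$, the positive nucleus--nucleus interaction absorbing part of the electron--nucleus attraction. Combined with the Lieb--Thirring kinetic-energy inequality $T:=\sum_i -\Delta_{x_i}/2\geq K\int\rho_\Psi^{5/3}$ and H\"older with exponents $5/3$ and $5/2$, this gives
\[
\pscal{\Psi,H^N_\Omega\Psi}\;\geq\; K\int\rho_\Psi^{5/3}-C\,\norm{\rho_\Psi}_{L^{5/3}(\Omega)}\Bigl(\int_\Omega d(x,\Z^3)^{-5/2}\,dx\Bigr)^{2/5}.
\]
The last integral is $\leq C'|\Omega|$ because $|x|^{-5/2}$ is locally integrable and only order $|\Omega|$ Voronoi cells are involved. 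A Young-type optimization in $\norm{\rho_\Psi}_{L^{5/3}(\Omega)}$ then yields $H^N_\Omega\geq -C''|\Omega|$ as an operator \emph{uniformly in $N$}; taking the infimum over $N$ finishes the bound on $E(\Omega)$.

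\medskip

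\emph{Free-energy bound.} For positive temperature I would split
\[
H_\Omega-\mu\cN\;=\;\bigl[\tfrac12 T+V_\Omega\bigr]+\bigl[\tfrac12 T-\mu\cN\bigr].
\]
Re-running the argument above with half of the kinetic energy shows that the first bracket is $\geq -C|\Omega|$ as an operator on $\cF_\Omega$, while the second bracket is the second quantization of the one-body Dirichlet Hamiltonian $h:=-\Delta/4-\mu$ on $L^2(\Omega)$. Therefore
\[
\tr_{\cF_\Omega}e^{-\beta(H_\Omega-\mu\cN)}\;\leq\; e^{\beta C|\Omega|}\det\nolimits_{L^2(\Omega)}\bigl(1+e^{-\beta h}\bigr),
\]
so that, after taking $-\beta^{-1}\log$, the problem reduces to the standard semiclassical free-energy bound for an ideal Fermi gas in a box,
\[
\beta^{-1}\tr_{L^2(\Omega)}\log\bigl(1+e^{-\beta h}\bigr)\;\leq\;C\bigl(\beta^{-5/2}+\max(0,\mu)^{5/2}\bigr)|\Omega|,
\]
which one obtains via a Berezin--Lieb comparison with the phase-space integral $\int\log\bigl(1+e^{-\beta(|p|^2/4-\mu)}\bigr)(2\pi)^{-3}\,dp$ followed by elementary estimates on that integral (treating the regions $|p|^2/4\leq 2\mu_+$ and $|p|^2/4>2\mu_+$ separately). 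Combining the two bounds yields the announced lower bound on $F(\Omega,\beta,\mu)$.

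\medskip

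The main technical obstacle is the electrostatic reduction, which turns the genuinely many-body Coulomb potential into a one-body nearest-neighbor potential. This requires a careful partition of $\R^3$ into Voronoi cells around the nuclei together with a delicate balancing of electron--electron, electron--nucleus and nucleus--nucleus contributions, and it is here that the uniform boundedness of the nuclear charges (all $=1$ in the crystal case) enters in an essential way. The remaining ingredients---Lieb--Thirring, H\"older and semiclassical asymptotics---are routine once this reduction is in hand.
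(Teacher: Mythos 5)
Your overall architecture (Baxter/Lieb--Yau reduction, then Lieb--Thirring, then a semiclassical bound for the ideal Fermi gas) is the same as the paper's, and your free-energy step --- bounding $\tr_{\cF_\Omega} e^{-\beta(H_\Omega-\mu\cN)}$ by $e^{\beta C|\Omega|}\det(1+e^{-\beta h})$ and then invoking a Weyl/Berezin--Lieb bound --- is a legitimate substitute for the paper's use of Peierls' inequality. The gap is in the zero-temperature part, at the step where you claim
$$\int_\Omega d(x,\Z^3)^{-5/2}\,dx\;\leq\;C'|\Omega|.$$
This is false for general bounded open sets, and the theorem is asserted (and needed) for \emph{all} bounded open sets. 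Take $\Omega=B(0,\epsilon)$, a tiny ball centered at a lattice point: then $\int_\Omega\delta^{-5/2}=8\pi\epsilon^{1/2}$ while $|\Omega|=\tfrac43\pi\epsilon^3$; a union of $M$ such balls at distinct lattice points gives a ratio $\sim\epsilon^{-5/2}\to\ii$ even at large total volume. Since your Young-inequality optimization raises the H\"older factor to the power $(2/5)\cdot(5/2)=1$, your final lower bound is really $-C\int_\Omega\delta^{-5/2}$, not $-C|\Omega|$. This generality is not pedantic here: the stability bound is exactly assumption \textbf{(A2)} of the abstract framework, and it gets applied to the intersections $\Omega\cap g\ell\triangle$ produced by averaging the Graf--Schenker tiling over all translations and rotations --- and those intersections can be arbitrarily thin slivers hugging the nuclei.

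The paper avoids this by using the kinetic energy in two separate stages. First, a pointwise one-body operator bound $-\Delta/4-(3/2+\sqrt2)/\delta(x)\geq-C$ on $L^2(\Omega)$, proved by the Sobolev/uncertainty principle in a ball of fixed radius around each nucleus, exploiting that the nuclei sit at mutual distance at least one (Dirichlet conditions only help). This absorbs the Coulomb singularity \emph{locally}, with a constant independent of $\Omega$, and yields $H_\Omega\geq-\tfrac14\sum_i\Delta_i-C\cN$. Only then are Lieb--Thirring and H\"older combined --- but applied to $\cN=\int_\Omega\rho_\Psi$, giving $\sum_i(-\Delta_{x_i})\geq C_{\rm LT}|\Omega|^{-2/3}\cN^{5/3}$; the volume now enters through the support of $\rho_\Psi$, which is valid for every bounded open $\Omega$, and minimizing $\tfrac{C_{\rm LT}}{4}|\Omega|^{-2/3}N^{5/3}-CN$ over $N$ gives $-C|\Omega|$. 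If you replace your H\"older-against-$\delta^{-5/2}$ step by this two-stage argument, the rest of your proposal, including the positive-temperature half, goes through.
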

\begin{proof}[Sketch of the proof]
The first step is to use an inequality for classical systems due to Baxter \cite{Baxter}, improved later by Lieb and Yau \cite{LY}, and which allows to bound the full $N$-body Coulomb potential by a one-body potential:
\begin{equation}
 V(x_1,...,x_N) \geq -\sum_{i=1}^N\frac{3/2+\sqrt2}{\delta(x_i)}
\label{ineg_Baxter}
\end{equation}
where $\delta(x)=\inf_{R\in\Z^3}|x-R|$ is the distance to the closest nucleus. Hence we have the lower bound
$$H^N_\Omega\geq \sum_{i=1}^N\left(-\frac{\Delta_{x_i}}{2}-\frac{3/2+\sqrt2}{\delta(x_i)}\right).$$
Next we split the kinetic energy in two parts and we use the uncertainty principle to show that on $L^2(\Omega)$
$$-\frac{\Delta}{4}-\frac{3/2+\sqrt2}{\delta(x)}\geq -C.$$
In proving this lower bound, one uses the Sobolev inequality in a small ball around each nucleus, exploiting the fact that the nuclei are fixed and separated by a distance at least one to each other. The proof of the stability of matter for systems with classical nuclei whose position is unknown is more difficult and it uses the improved version of \eqref{ineg_Baxter} contained in the paper by Lieb and Yau \cite{LY}, as explained in our work \cite{2}. This shows
\begin{equation}
 H^N_\Omega\geq \sum_{i=1}^N\left(-\frac{\Delta_{x_i}}{4}-C\right)\quad\text{hence}\quad H_\Omega\geq -\frac14\sum_i\Delta_i-C\cN
\label{estim_Hamil}
\end{equation}
on $L^2(\Omega)$ and $\cF_\Omega$ respectively.
The last step is to use the Lieb-Thirring inequality \cite{LT} which states that
\begin{equation}
 \pscal{\sum_{i=1}^N\left(-\Delta_{x_i}\right)\Psi,\Psi}\geq C_{\rm LT}\int_{\Omega}\rho_\Psi(x)^{5/3}dx
\label{eq_LT}
\end{equation}
for all $N\geq1$ and all $N$-body fermionic wavefunction $\Psi\in \bigwedge_1^N L^2(\Omega)$. The density of charge $\rho_\Psi$ is as usual defined by $\rho_\Psi(x)=N\int_{\Omega^{N-1}}|\Psi(x,y)|^2dy$. Using the fact that $\int_\Omega\rho_\Psi=N$ and Hölder's inequality, \eqref{eq_LT} yields on the Fock space $\cF_\Omega$
\begin{equation}
\sum_i(-\Delta_{x_i})\geq C_{\rm LT}|\Omega|^{-2/3}\cN^{5/3}.
\label{eq_LT2}
\end{equation}
Hence we obtain $H_\Omega\geq (C_{\rm LT}/4)|\Omega|^{-2/3}\cN^{5/3}-C\cN$ which, when optimized over $N$, gives the result for the ground state energy. 

For the free energy, we use \eqref{estim_Hamil}, \eqref{eq_LT2} and Peierls' inequality \cite{Ruelle,Wehrl} to get
$$F(\beta,\mu,\Omega)\geq-\frac1\beta\log\tr_{\cF}\left(e^{-\beta\sum_i(-\Delta_i)/4}\right)-C(1+\mu_+^{5/2})|\Omega|.$$
The first term of the r.h.s. is the free energy of a free-electron gas which is bounded below by $-C(1+\beta^{-5/2})|\Omega|$ in the thermodynamic limit \cite{2}.
\end{proof}

In order to state our main result, we need the
\begin{definition}[Regular sets in $\R^3$]
Let be $a>0$ and $\epsilon>0$.
We say that a bounded open set $\Omega\subseteq\R^3$ has \emph{an $a$-regular boundary in the sense of Fisher} if, denoting by $\partial\Omega=\overline{\Omega}\setminus{\Omega}$ the boundary of ${\Omega}$,
\begin{equation}
 \forall t\in[0,1],\qquad \left|\left\{x\in\R^3\ |\ \textnormal{d}(x,\partial\Omega)\leq |\Omega|^{1/3}t\right\}\right|\leq |\Omega|\,a\, t.
\label{def_reg_boundary}
\end{equation}
We say that a bounded open set $\Omega\subseteq\R^3$ satisfies the \emph{$\varepsilon$-cone property} if for any $x\in\Omega$ there is a unit vector $a_x\in\R^3$ such that 
$$\{y\in\R^3\ |\  (x-y)\cdot a_x> (1-\varepsilon^2) |x-y|,\ |x-y|<\varepsilon\} \subseteq\Omega.$$
We denote by $\cR_{a,\varepsilon}$ the set of all $\Omega\subseteq\R^3$ which have an $a$-regular boundary and such that both $\Omega$ and $\R^3\setminus\Omega$ satisfy the $\varepsilon$-cone property.
\end{definition}

Note that any open convex set is in $\cR_{a,\varepsilon}$ for some $a>0$ large enough and $\varepsilon>0$ small enough  \cite{1}.
We may state our main 
\begin{theorem}[Thermodynamic Limit for the Crystal \cite{2}]\label{thm:crystal}
There exist $\bar e\in\R$ and a function $\bar f:(0,\ii)\times\R\to\R$ such that the following holds: for any sequence $\{\Omega_n\}_{n\geq1}\subseteq \cR_{a,\epsilon}$ of domains with $|\Omega_n|\to\ii$, $|\Omega_n|^{-1/3}{\rm diam}(\Omega_n)\leq C$, $a\geq a_0>0$ and $0<\varepsilon\leq\varepsilon_0$
\begin{equation}
\lim_{n\to\ii}\frac{E(\Omega_n)}{|\Omega_n|}=\bar e,\qquad \lim_{n\to\ii}\frac{F(\Omega_n,\beta,\mu)}{|\Omega_n|}=\bar f(\beta,\mu).
\label{limit_FE}
\end{equation}
Moreover $\bar f$ takes the form $\bar f(\beta,\mu)=\phi(\beta)-\mu$.
\end{theorem}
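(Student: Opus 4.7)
The plan is to follow the two-step strategy described in the introduction: first establish convergence of $F(\triangle_\ell,\beta,\mu)/|\triangle_\ell|$ along a sequence of dilated reference tetrahedra $\triangle_\ell=\ell\triangle_0$, then propagate this limit to any sequence $\{\Omega_n\}\subseteq\cR_{a,\epsilon}$ satisfying the hypotheses of the theorem, and finally read off the affine dependence on $\mu$.

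For the first step, the central tool is the Graf--Schenker inequality \cite{GS,G}, which for a reference tetrahedron $\triangle_0$ and a scale $\ell$ bounds the Coulomb kernel $1/|x-y|$ from below by a rigid-motion average of tetrahedron-restricted kernels, up to an error of order $\ell$ divided by the diameter of the ambient region. Applied to the interaction $V_{\triangle_L}$ inside a large tetrahedron $\triangle_L$, it produces a lower bound of $H_{\triangle_L}$ by a motion-average of Hamiltonians supported on translates of $\triangle_\ell$; combined with Jensen's inequality for the free energy $\Gamma\mapsto-(1/\beta)\log\tr e^{-\beta(\cdot)}$ and with additivity of the Dirichlet kinetic energy under partition of $\triangle_L$, it yields
\begin{equation*}
\frac{F(\triangle_L,\beta,\mu)}{|\triangle_L|}\geq \frac{F(\triangle_\ell,\beta,\mu)}{|\triangle_\ell|}-C\left(\frac{\ell}{L}+\frac{1}{\ell}\right)\bigl(1+\beta^{-5/2}+\mu_+^{5/2}\bigr).
\end{equation*}
A matching upper bound is obtained by a standard trial-state construction: pack $\triangle_L$ with disjoint translates of $\triangle_\ell$ and take a tensor product of near-minimizers on each block. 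Combined with the uniform lower bound of Theorem~\ref{thm:stability}, a Fekete-type argument then produces $\bar f(\beta,\mu):=\lim_{\ell\to\ii}F(\triangle_\ell,\beta,\mu)/|\triangle_\ell|$, and analogously $\bar e$.

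For the second step, fix $\Omega_n\in\cR_{a,\epsilon}$ and an intermediate scale $1\ll\ell\ll|\Omega_n|^{1/3}$. I would apply Graf--Schenker directly inside $\Omega_n$, averaging the Coulomb kernel over rigid motions of $\triangle_\ell$. The inequality's error is controlled by $\ell/{\rm diam}(\Omega_n)\leq C\,\ell\,|\Omega_n|^{-1/3}$, thanks to the bounded normalized-diameter hypothesis. Motions $g$ for which $g\triangle_\ell\subseteq\Omega_n$ contribute interior free energies governed by Step~1; motions for which $g\triangle_\ell$ is cut by $\partial\Omega_n$ are discarded via Theorem~\ref{thm:stability}, and the Fisher regularity \eqref{def_reg_boundary} ensures their relative contribution is $O(\ell\,|\Omega_n|^{-1/3})$. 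The matching upper bound is obtained by packing $\Omega_n$ with disjoint interior copies of $\triangle_\ell$ (using the $\epsilon$-cone property on $\Omega_n$ to reach near the boundary) and tensorising near-minimizers; the leftover volume is $o(|\Omega_n|)$ by the same Fisher estimate. Letting $n\to\ii$ and then $\ell\to\ii$ gives $F(\Omega_n,\beta,\mu)/|\Omega_n|\to\bar f(\beta,\mu)$, and the same argument handles $E(\Omega_n)/|\Omega_n|$.

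For the affine form $\bar f(\beta,\mu)=\phi(\beta)-\mu$, the key point is approximate neutrality: any state $\Gamma$ on $\cF_\Omega$ of bounded free-energy density must satisfy $\tr(\cN\Gamma)=|\Omega|+o(|\Omega|)$, for otherwise the unbalanced charge $Q=\tr(\cN\Gamma)-|\Omega|$ would create a Coulomb self-energy of order $Q^2/|\Omega|^{1/3}$ which violates the stability bound of Theorem~\ref{thm:stability}. Consequently, shifting $\mu\mapsto\mu+c$ in \eqref{def_Free_energy} changes the free-energy density by $-c+o(1)$, so $\bar f(\beta,\mu)+\mu$ is independent of $\mu$, yielding the announced form with $\phi(\beta)=\bar f(\beta,0)$. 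The main obstacle is Step~2: while Graf--Schenker cleanly produces the lower bound by a local decomposition, the matching upper bound requires gluing tetrahedral trial states inside the irregular domain $\Omega_n$ without paying an extensive Coulomb cross-energy between neighboring blocks nor producing an uncontrolled boundary contribution. The geometric hypotheses defining $\cR_{a,\epsilon}$---Fisher regularity of $\partial\Omega$ together with the $\epsilon$-cone property for both $\Omega$ and $\R^3\setminus\Omega$---are precisely the inputs that make this gluing construction quantitative and uniform in the sequence $\{\Omega_n\}$.
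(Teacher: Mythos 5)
Your lower bounds do follow the paper's route (the Graf--Schenker subaverage mechanism, with boundary tetrahedra discarded via stability), and the neutrality argument for $\bar f(\beta,\mu)=\phi(\beta)-\mu$ is the standard one. The genuine gap is in every place where you construct an \emph{upper} bound by packing with disjoint copies of $\ell\triangle$ and tensorising near-minimizers. For the crystal this fails for two separate reasons. First, the leftover region --- the gaps between packed tetrahedra and the boundary layer of $\Omega_n$ --- still contains nuclei; putting the vacuum there leaves them unscreened, and their mutual Coulomb repulsion is positive and nowhere near $o(|\Omega_n|)$. This is precisely why the paper singles out the continuity property \textbf{(A4)} as the hardest step for the crystal: one must place a radial electron in a small ball on top of each leftover nucleus (so that its exterior potential vanishes by Newton's theorem) and, for nuclei too close to the boundary, displace the ball using the $\epsilon$-cone property to form dipoles, whose interaction with the ground state of $\Omega'$ must then be shown to be $o(|\Omega|)$ by a refined stability-of-matter estimate. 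None of this appears in your proposal; Fisher regularity of $\partial\Omega_n$ alone does not produce it. Second, even granting screened blocks, the inter-block Coulomb energy for a \emph{fixed} tiling is not controlled: the Graf--Schenker inequality \eqref{GS_ineq} bounds only the \emph{average} over rigid motions of the tiling, and only from \emph{below}, and there is no rotation invariance available to average the states themselves as in Lieb--Lebowitz. So "tensorise and sum" has no justification for the cross terms.

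The paper avoids both problems by never tensorising. For the reference simplex (Theorem \ref{thm:limit_simplex}) no upper-bound construction is needed at all: the subaverage property gives $e_L(\bar g)\geq\int e_\ell\,dg-C(\alpha(\ell)+\ell/L)$, so the averaged energy per unit volume is almost nondecreasing and bounded above (by \textbf{(A4)} with $\Omega'=\emptyset$), hence convergent, and \textbf{(A4)} upgrades this to uniform convergence in $g$. For a general $\Omega_n$ (Theorem \ref{thm:limit_general}) the trial state is obtained by \emph{localizing the Gibbs state of a large simplex} $L_n\triangle\supseteq\Omega_n$ onto the sub-collection of tiles covering $\Omega_n$; controlling the entropy of this localized state requires the strong subadditivity of the quantum entropy, and controlling the discarded interactions requires the averaged two-body decomposition, i.e.\ assumption \textbf{(A6)}. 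These two ingredients --- the dipole/screening proof of \textbf{(A4)} and the localization-plus-SSA upper bound of \textbf{(A6)} --- are the core of the proof and are missing from your proposal.
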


\begin{remark}
We know from \cite[Appendix A p. 385]{LL} and \cite[Lemma 1]{Fisher} that if each set $\Omega_n$ of the considered sequence is connected, then automatically $|\Omega_n|^{-1/3}{\rm diam}(\Omega_n)\leq C$.
\end{remark}

A very similar result was proved by C. Fefferman \cite{F}. Our result is more general: we allow any sequence $\Omega_n$ tending to infinity and which is regular in the sense that $\{\Omega_n\}_{n\geq1}\subseteq \cR_{a,\epsilon}$. In Fefferman's paper \cite{F}, $\Omega_n=\ell_n(\Omega+x_n)$ where $\ell_n\to\ii$, $\Omega$ is a fixed convex open set and $x_n$ is any sequence in $\R^3$. These sets are always in $\cR_{a,\epsilon}$ for some $a,\varepsilon>0$.

In our work \cite{2} a result even more general than Theorem \ref{thm:crystal} is shown: we are able to prove the existence of the same thermodynamic limit if the crystal is \emph{locally perturbed} (for instance finitely many nuclei are moved or their charge is changed). A similar result can also be proved for the Hartree-Fock model.

\section{Other models}\label{sec:examples}
Our approach \cite{1,2} is general and it can be applied to a variety of models, not only the crystal case. We quickly mention two such examples. It is interesting to note that for these other models, we do not need the cone property and we can weaken the assumptions on the regularity of the boundary by replacing $t$ on the r.h.s. of \eqref{def_reg_boundary} by any $t^p$, $0<p\leq1$. Details may be found in our article \cite{2}. Roughly speaking, when the system is ``rigid'' like for the crystal (the nuclei are fixed), the proof is more complicated and more assumptions are needed on the sequence of domains to avoid undesirable boundary effects.

\subsection{Quantum particles in a periodic magnetic field.} Define the magnetic kinetic energy $T(A)=(-i\nabla +A(x))^2$ where $B=\nabla\times A$ is periodic (for instance constant) and $A\in L^2_{\rm loc}(\R^3)$. Next, consider the Hamiltonian
$$H_\Omega^{N,K}:=\sum_{i=1}^NT(A)_{x_i}+\sum_{k=1}^KT(A)_{R_k}+V(x,R),$$
$$V(x,R)=\sum_{i,k}\frac{-z}{|R_k-x_i|}+\frac12\sum_{i\neq j}\frac{1}{|x_i-x_j|}
+\frac12\sum_{k\neq k'}\frac{z^2}{|R_k-R_{k'}|}$$
The ground state energy is this time defined as
$$E'(\Omega):=\inf_{N,K\geq0}\inf\sigma_{\bigwedge_1^NL^2(\Omega)\otimes S\!\bigotimes_{1}^KL^2(\Omega)}\left(H^{N,K}_\Omega\right).$$
We do not precise the symmetry $S$ of the particles of charge $z$ which can be bosons or fermions.
A formula similar to \eqref{def_Free_energy} may be used for the free energy on the (electronic and nucleic) Fock space. We prove in our paper \cite{2} a result similar to Theorem \ref{thm:crystal} for this model.
Lieb and Lebowitz already proved it in the seminal paper \cite{LL} when $A\equiv0$. They used as an essential tool the rotation-invariance of the system to obtain screening. When $A\neq0$ the system is no more invariant by rotations and their method cannot be applied.

\subsection{Classical nuclei with optimized position.} For all $R\subset\Omega$, $\#R<\ii$, let us define
$$H^{N,R}_\Omega:=\sum_{i=1}^N-\frac{\Delta_{x_i}}{2}+V(x,R)$$
and the associated ground state energy by
$$E''(\Omega):=\inf_{N\geq0}\inf_{\substack{R\subset\Omega,\\ \#R<\ii}}\inf\sigma_{\bigwedge_1^NL^2(\R^3)}\left(H^{N,R}_\Omega\right).$$
We could as well optimize the charges in $[0,z]$ of the nuclei without changing the energy \cite{DL,2}. However, the free energy itself is not the same when the charges of the nuclei are optimized or not \cite{2}.

Surprisingly, to our knowledge the existence of the thermodynamic limit for this model was unknown. A result similar to Theorem \ref{thm:crystal} is proved in our paper \cite{2} for $E''$.

\section{A general method}
In this section, we give the main ideas of our new approach which allows to prove Theorem \ref{thm:crystal} and its counterparts for the other models quoted before.

\subsection{Screening via the Graf-Schenker inequality}
\begin{figure}[t]
\centering
\includegraphics[width=11cm]{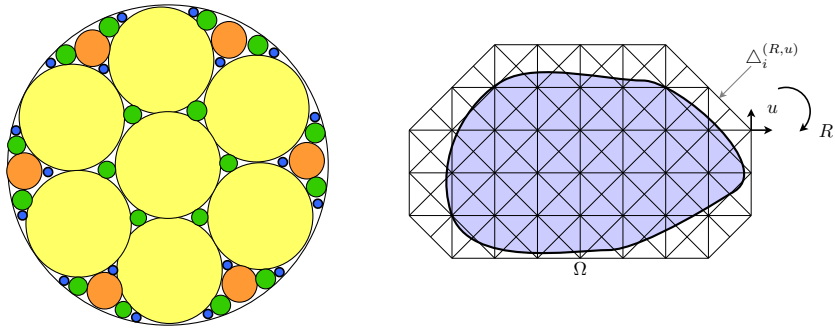}
\caption{A comparison between the original method of Lieb and Lebowitz \cite{LL} (left) and our method based on the Graf-Schenker inequality \cite{GS,1,2} (right).}
\label{fig:compare}
\end{figure}
As mentioned in the introduction, an important step is to quantify screening. For quantum nuclei without a magnetic field ($A\equiv0$), Lieb and Lebowitz used \cite{LL} the following method (see Figure \ref{fig:compare}). First they took a big ball $B$ which they packed with several small balls $B_k$ of different size. In each of these balls, they took the (neutral) ground state of the corresponding ball. As the system is invariant under rotations, they can freely rotate each ground state. Averaging over rotations of all the small balls, they reduced the computation of the interaction between them to that of classical pointwise particles located at the center of the balls, by Newton's theorem. As each subsystem is neutral, this interaction vanishes. This proves \emph{a fortiori} that there exists an adequate rotation of each system in each little ball such that the total interaction between them cancels. Choosing this configuration, they could build a test function whose energy is just the sum of the small energies, proving an estimate of the form $E(B)\leq \sum_k E(B_k)$. This inequality can be used to prove the limit for balls. Clearly this trick can only be used for rotation-invariant systems.

Note in the Lieb-Lebowitz proof, a domain (the big ball) is split in several \emph{fixed subdomains} and an average is done \emph{over rotations of the states} in each small domain. This yields an \emph{upper bound} to the energy. The Graf-Schenker inequality is kind of dual to the above method (see Figure \ref{fig:compare}). This time a domain $\Omega$ is split in several subdomains by using a tiling of the space $\R^3$. But the system is frozen in the state of the big domain $\Omega$ and the average is done \emph{over the position of the tiling}. This yields a \emph{lower bound} to the energy of the form $E(\Omega)\geq \sum_k E(\Delta_i^{(r,u)}\cap\Omega)+\text{errors}$, where $\Delta_i^{(r,u)}$ are the tetrahedrons which make up the (translated and rotated) tiling.

The Graf-Schenker inequality was inspired by previous works of Conlon, Lieb and Yau \cite{CLY1,CLY2}. It is an estimate on the Coulomb energy of classical particles. The proof of Fefferman in the crystal case \cite{F} was also based on a \emph{lower bound} on the free energy in a big set and an average over translations of a covering of this set (the method was reexplained later in details by Hugues \cite{Hugues}). Fefferman \cite{F} uses a covering with balls and cubes of different size. The lower bound depends on the number of balls contained in the big domain and of the form of the kinetic energy which is used to control error terms.

Let $G=\R^3\rtimes SO_3(\R)$ be the group of translations and rotations acting on $\R^3$, and denote by $d\lambda(g)$ its Haar measure. 
\begin{lemma}[Graf-Schenker inequality \cite{GS}] 
 Let $\triangle$ be a simplex in $\R^3$. There exists a constant $C$ such that for any $N\in\N$, $z_1,...,z_N\in\R$, $x_i\in\R^3$ and any $\ell>0$,
\begin{equation}
\sum_{1\leq i<j\leq N} \frac{z_iz_j}{|x_i-x_j|}\geq \int_{G} \frac{d\lambda(g)}{|\ell\triangle|}\sum_{1\leq i<j\leq N}\frac{z_iz_j\1_{g\ell\triangle}(x_i)\1_{g\ell\triangle}(x_j)}{|x_i-x_j|}
 -\frac{C}{\ell}\sum_{i=1}^Nz_i^2.
\label{GS_ineq}
\end{equation}
\end{lemma}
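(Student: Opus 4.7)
The starting point is to rewrite both sides of \eqref{GS_ineq} in a rotation-invariant form. By Haar invariance of $d\lambda$ under $G$, the function
$$\rho_\ell(w)\;:=\;\frac{1}{|\ell\triangle|}\int_G \1_{g\ell\triangle}(0)\,\1_{g\ell\triangle}(w)\,d\lambda(g)\;=\;\frac{1}{|\triangle|}\int_{SO_3(\R)}\bigl|\triangle\cap(\triangle+R\,w/\ell)\bigr|\,dR$$
is radial in $w$, satisfies $0\le\rho_\ell\le 1$, $\rho_\ell(0)=1$, and vanishes for $|w|\ge\ell\,\mathrm{diam}(\triangle)$. By Fubini, the right-hand side of \eqref{GS_ineq} equals $\sum_{i<j}z_iz_j\rho_\ell(x_i-x_j)/|x_i-x_j|$, so the inequality becomes
$$\sum_{i<j}z_iz_j\,K_\ell(x_i-x_j)\;\ge\;-\frac{C}{\ell}\sum_{i=1}^Nz_i^2,\qquad K_\ell(w)\;:=\;\frac{1-\rho_\ell(w)}{|w|},$$
where $K_\ell$ is extended to the origin by continuity.

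I would next reduce to two sub-claims: (a) $K_\ell$ is of positive type on $\R^3$, i.e.\ its Fourier transform is a nonnegative tempered measure; and (b) $K_\ell(0)\le C/\ell$. Granted (a), Bochner applied to the signed measure $\sum_i z_i\delta_{x_i}$ yields $\sum_{i,j=1}^Nz_iz_jK_\ell(x_i-x_j)\ge 0$, and separating the diagonal then gives $2\sum_{i<j}z_iz_jK_\ell(x_i-x_j)\ge -K_\ell(0)\sum_iz_i^2$; combining with (b) closes the argument with constant $C=K_\ell(0)\ell/2$. Sub-claim (b) is straightforward geometry: for a convex body $\triangle$ and small $v\in\R^3$, $|\triangle|-|\triangle\cap(\triangle+v)|\le|\partial\triangle|\,|v|$, so averaging over rotations and rescaling yields $1-\rho_\ell(w)\le (|\partial\triangle|/|\triangle|)\,|w|/\ell$, hence $K_\ell(0)\le (|\partial\triangle|/|\triangle|)/\ell$.

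The main obstacle is sub-claim (a). Since $K_\ell$ is radial, bounded near $0$, and equal to $1/|w|$ for $|w|\ge\ell\,\mathrm{diam}(\triangle)$, its Fourier transform on $\R^3$ (as a tempered distribution) is given by the sine/Bessel formula
$$\widehat{K_\ell}(\xi)\;=\;\frac{4\pi\ell}{|\xi|}\int_0^\infty\sin(\ell|\xi|s)\bigl(1-f(s)\bigr)\,ds,\qquad f(s)\;:=\;\frac{1}{|\triangle|}\int_{SO_3(\R)}\bigl|\triangle\cap(\triangle+sR\hat v)\bigr|\,dR,$$
so (a) reduces to showing $\int_0^\infty\sin(\lambda s)(1-f(s))\,ds\ge 0$ for every $\lambda>0$. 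I would attack this by exploiting the structure of the rotation-averaged covariogram of a convex body: $f$ is smooth, nonincreasing, and convex on its support $[0,\mathrm{diam}(\triangle)]$, with $f(0)=1$ and $f(\mathrm{diam}(\triangle))=0$, and admits explicit endpoint asymptotics coming from the polytope geometry of $\triangle$. Equivalently, I would try to write $K_\ell$ as a superposition, with nonnegative weight, of elementary kernels known to be of positive type (for instance kernels of the form $1/|w|-e^{-\alpha|w|}/|w|$, whose Fourier transforms $4\pi\alpha^2/(|\xi|^2(|\xi|^2+\alpha^2))$ are nonnegative). Establishing this sine-integral positivity is the analytic/geometric heart of the Graf--Schenker lemma and is the step in which the specific convex-body structure of $\triangle$ enters crucially.
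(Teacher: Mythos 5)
First, a point of reference: the paper itself gives no proof of this lemma --- it is quoted verbatim from Graf and Schenker \cite{GS} --- so your proposal can only be measured against their argument. Your reduction is the correct one and coincides with the skeleton of that proof: Haar-averaging $\1_{g\ell\triangle}(x_i)\1_{g\ell\triangle}(x_j)$ does produce the rotation-averaged normalized covariogram $\rho_\ell(x_i-x_j)$, the inequality is equivalent to a statement about the kernel $K_\ell=(1-\rho_\ell)/|\cdot|$, and it follows from your sub-claims (a) positive-definiteness of $K_\ell$ and (b) $K_\ell(0)\le C/\ell$. Sub-claim (b), the Bochner/diagonal-separation step, and the identification of the group average with the covariogram are all correct.

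The genuine gap is sub-claim (a), which you rightly call the heart of the lemma but do not prove, and your first proposed route cannot succeed as stated. The structural properties you invoke for the averaged covariogram $f$ --- nonincreasing, convex, $f(0)=1$, compactly supported --- do not imply the sine-integral positivity $\int_0^\infty\sin(\lambda s)\bigl(1-f(s)\bigr)\,ds\ge0$. Indeed, take $f(s)=(1-s)_+$, which has all of these properties: then $K(w)=\min(1,1/|w|)$ and a direct computation gives $\widehat{K}(\xi)=4\pi\sin(|\xi|)/|\xi|^3$, which changes sign. So positive-definiteness of $(1-f(|\cdot|))/|\cdot|$ is not a consequence of monotonicity and convexity of the covariogram; it is a special property of the simplex, established in \cite{GS} by an explicit computation that uses the polytope geometry in an essential way (and it is precisely why the paper insists on tetrahedra rather than balls or cubes, for which the analogous positivity is not available). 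Your second suggested route --- exhibiting $K_\ell$ as a nonnegative superposition of Yukawa-difference kernels --- is a reformulation of sub-claim (a), not a proof of it. As it stands, the proposal is a correct and well-organized reduction together with an unproven key claim, so it does not constitute a proof of the lemma.
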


In the previous theorem it is not assumed that $\triangle$ yields a tiling of $\R^3$. Up to an error which scales like $\ell$, \eqref{GS_ineq} says that the total Coulomb energy can be bounded from below by the Coulomb energy (per unit volume) of the particles which are in the (dilated) simplex $g\ell\triangle$, averaged over all translations and rotations $g$ of this simplex.

Because of the above inequality, simplices play a specific role in the study of Coulomb systems. Hence proving the existence of the thermodynamic limit for simplices first is natural (as it was natural to consider balls in the Lieb-Lebowitz case due to the invariance by rotation). In the next section we give an abstract setting for proving the existence of the limit when an inequality of the form \eqref{GS_ineq} holds true.

\subsection{An abstract result}
In this section we consider an abstract energy $E:\Omega\in\cM\mapsto E(\Omega)\in\R$ defined on the set $\cM$ of all bounded open subsets of $\R^3$ and we give sufficient conditions for the existence of the thermodynamic limit. In the application, $E$ will be either the ground state energy, or the free energy of the system under consideration. 

We fix a reference set $\triangle\in\cR_{a,\epsilon}$ which is only assumed to be a bounded open convex set in $\R^3$ (it need not be a simplex for this section), such that $0\in\triangle$. Here $a,\epsilon>0$ are fixed. We assume that the energy $E$ satisfies the following five assumptions:

\medskip

\noindent\textbf{(A1)} {\it (Normalization).} $E(\emptyset)=0$.

\medskip

\noindent\textbf{(A2)} {\it (Stability).} $\forall \Omega\in\cM$, $E(\Omega)\geq -\kappa|\Omega|$.

\medskip

\noindent \textbf{(A3)} {\it (Translation Invariance).} $\forall \Omega\in\cR_{a,\epsilon}$, $\forall z\in\Z^3$, $E(\Omega+z)=E(\Omega)$.

\medskip

\noindent \textbf{(A4)} {\it (Continuity).} $\forall\Omega\in\cR_{a,\epsilon}, \Omega'\in\cR_{a',\epsilon'}$ with $\Omega'\subseteq\Omega$ and $\text{d}(\partial\Omega,\partial\Omega')>\delta$,
$$E(\Omega)\leq E(\Omega')+\kappa|\Omega\setminus\Omega'|+|\Omega|\alpha(|\Omega|).$$

\medskip

\noindent \textbf{(A5)} {\it (Subaverage Property).} For all $\Omega\in\cM$, we have
\begin{equation}
E(\Omega)\geq \frac{1-\alpha(\ell)}{|\ell\triangle|}\int_{G}E\big(\Omega\cap g\cdot(\ell\triangle)\big)\,d\lambda(g)-|\Omega|_{\rm r}\,\alpha(\ell)
\label{sliding_equation}
\end{equation}
where $|\Omega|_{\rm r}:=\inf\{|\tilde\Omega|,\ \Omega\subseteq\tilde\Omega,\ \tilde\Omega\in\cR_{a,\epsilon}\}$ is the  regularized volume of $\Omega$.

\bigskip

In the assumptions above $\alpha$ is a fixed function which tends to 0 at infinity and $\delta,a',\epsilon'$ are fixed positive constants. In our work \cite{1}, an even more general setting is provided. First \textbf{(A3)} can be replaced by a much weaker assumption but we do not detail this here. Also a generic class of regular sets $\cR$ is considered instead of $\cR_{a,\epsilon}$. This is because for instance the cone property is only needed for the crystal case and it is not at all necessary in other models, hence the concept of regularity depends on the application.

Notice \textbf{(A4)} essentially says that a small decrease of $\Omega$ will not deacrease too much the energy. A similar property was used and proved in the crystal case by Fefferman \cite[Lemma 2]{F}. Taking $\Omega'=\emptyset$ and using \textbf{(A1)}, property \textbf{(A4)} in particular implies that for any regular set $\Omega\in\cR_{a,\epsilon}$,  $E(\Omega)\leq C|\Omega|$. However this upper bound need not be true for all $\Omega\in\cM$. 
We give a sketch of the proof of the following result in Section \ref{sec:proof}.
\begin{theorem}[Abstract Thermodynamic Limit for $\triangle$ \cite{1}]\label{thm:limit_simplex}
Assume $E:\cM\to\R$ satisfies the above properties \textbf{(A1)--(A5)} for some open convex set $\triangle\in\cR_{a,\epsilon}$ with $0\in\triangle$. There exists $\bar e\in\R$ such that $e_\ell(g)=|\ell\triangle|^{-1}E\big(g\ell\triangle\big)$ converges uniformly towards $\bar e$ for $g\in G=\R^3\rtimes SO(3)$ and as $\ell\to\ii$. 
Additionally, the limit $\bar e$ does not depend on the set $\triangle$\footnote{This means if all the assumptions are true for another set $\triangle'$ then one must have $\bar e'=\bar e$.}.
\end{theorem}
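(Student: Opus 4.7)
The plan is to combine the subaverage inequality \textbf{(A5)} with the translation invariance \textbf{(A3)} and the one-sided continuity \textbf{(A4)} to compare the energies at two very different scales $\ell\ll\ell'$, derive the convergence of a natural averaged energy density, and finally bootstrap pointwise convergence into uniform convergence.

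First, \textbf{(A2)} and \textbf{(A4)} (applied with $\Omega'=\emptyset$) give $-\kappa\leq e_\ell(g)\leq\kappa+\alpha(|\ell\triangle|)$, so $e_\ell$ is uniformly bounded and all the liminfs and limsups below are finite. I introduce the averaged density
$$\bar e(\ell):=\frac{1}{|\ell\triangle|}\int_{SO(3)}\!\!dR\int_{[0,1)^3}\!\!E(R\ell\triangle+y)\,dy,$$
which by \textbf{(A3)} is exactly the average of $e_\ell$ over the fundamental domain $SO(3)\times[0,1)^3$ of $G/\Z^3$.

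The central step is to apply \textbf{(A5)} to $\Omega=g_0\ell'\triangle$ at test scale $\ell\ll\ell'$ and split the $g$-integration on the right-hand side into a \emph{bulk} region, where $g\ell\triangle$ lies inside $g_0\ell'\triangle$ at distance at least $\delta$ from its boundary, and a \emph{boundary} region, where $g\ell\triangle$ meets $\partial(g_0\ell'\triangle)$. The boundary region has Haar measure of order $\ell(\ell')^{2}$ by the regularity of $\triangle$, so by \textbf{(A2)} its contribution is $O(\ell/\ell')$ after dividing by $|\ell'\triangle|$. In the bulk $E(g_0\ell'\triangle\cap g\ell\triangle)=E(g\ell\triangle)$, and writing the translation part of $g$ as $x=z+y$ with $z\in\Z^3$, $y\in[0,1)^3$ and using \textbf{(A3)} to drop the $z$-piece, the bulk integral equals $|\mathrm{bulk}|\cdot\bar e(\ell)\cdot|\ell\triangle|$ up to the same boundary-sized errors. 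Dividing by $|\ell'\triangle|$ yields the uniform-in-$g_0$ estimate
$$e_{\ell'}(g_0)\;\geq\;\bar e(\ell)\bigl(1-\eta_1(\ell,\ell')\bigr)-\eta_2(\ell),$$
with $\eta_1(\ell,\ell')\to 0$ for each fixed $\ell$ as $\ell'\to\infty$, and $\eta_2(\ell)\to 0$ as $\ell\to\infty$. Averaging over $g_0\in SO(3)\times[0,1)^3$ produces $\bar e(\ell')\geq\bar e(\ell)(1-\eta_1)-\eta_2$, hence $\liminf_{\ell'\to\infty}\bar e(\ell')\geq\limsup_{\ell\to\infty}\bar e(\ell)$ and $\bar e(\ell)\to\bar e$ for some $\bar e\in\R$. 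Going back to the pointwise inequality gives the uniform lower bound $\liminf_{\ell\to\infty}\inf_g e_\ell(g)\geq\bar e$, and the $\triangle$-independence of $\bar e$ follows by running the same comparison with two different reference simplices.

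The main obstacle is the matching uniform upper bound, since \textbf{(A5)} only yields lower bounds and \textbf{(A4)} alone does not provide a two-sided modulus of continuity. My plan is first to observe that the uniform lower bound combined with $\bar e(\ell)\to\bar e$ forces $(e_\ell-\bar e)_+\to 0$ in $L^1$ on the compact set $SO(3)\times[0,1)^3$. To upgrade this $L^1$-smallness into uniform smallness I would then establish an asymptotic equicontinuity modulus for $g\mapsto e_\ell(g)$ by combining \textbf{(A4)} and \textbf{(A5)}: for nearby $g,g'\in G$ choose a regular set $\Omega_\ast\subseteq g\ell\triangle\cap g'\ell\triangle$ of volume $|\ell\triangle|-O(d(g,g')\ell^3)$, use \textbf{(A4)} to bound both $E(g\ell\triangle)$ and $E(g'\ell\triangle)$ from above by $E(\Omega_\ast)+O(d(g,g')\ell^3)$, and use \textbf{(A5)} at a small sub-scale to bound the three energies from below by $(\bar e-o(1))$ times the corresponding volumes. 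This yields $|e_\ell(g)-e_\ell(g')|\leq C\,d(g,g')+o(1)$ with $C$ uniform in $\ell$, and equicontinuity on the compact parameter set together with the $L^1$-bound gives the uniform convergence and completes the proof of Theorem~\ref{thm:limit_simplex}.
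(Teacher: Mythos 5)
Your first two steps reproduce the paper's argument: the bulk/boundary decomposition of the $g$-integral in \textbf{(A5)} applied to $\Omega=g_0\ell'\triangle$, the $O(\ell/\ell')$ control of the boundary tetrahedra via \textbf{(A2)} and the regularity of $\triangle$, the resulting inequality $e_{\ell'}(g_0)\geq \bar e(\ell)(1-\eta_1)-\eta_2$, the convergence of the averages $\bar e(\ell)$, the uniform lower bound, and the $L^1$ convergence of $e_\ell$ are all exactly what the paper does. The gap is in your last step, the upgrade from $L^1$ to uniform convergence. Your proposed equicontinuity bound $|e_\ell(g)-e_\ell(g')|\leq C\,d(g,g')+o(1)$ cannot be extracted from the ingredients you list. \textbf{(A4)} is one-sided: it gives $E(g\ell\triangle)\leq E(\Omega_*)+O(d\,\ell^3)$ and $E(g'\ell\triangle)\leq E(\Omega_*)+O(d\,\ell^3)$, i.e.\ two \emph{upper} bounds by the same quantity $E(\Omega_*)$, while \textbf{(A5)} only supplies \emph{lower} bounds $E(\cdot)\geq(\bar e-o(1))|\cdot|$. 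To deduce, say, $e_\ell(g)-e_\ell(g')\leq Cd+o(1)$ from $E(g\ell\triangle)\leq E(\Omega_*)+O(d\,\ell^3)$ and $E(g'\ell\triangle)\geq(\bar e-o(1))|\ell\triangle|$, you would need the matching upper bound $E(\Omega_*)\leq(\bar e+o(1))|\Omega_*|$ --- which is precisely the statement you are trying to prove (and the crude bound $E(\Omega_*)\leq C|\Omega_*|$ from \textbf{(A4)} with $\Omega'=\emptyset$ does not suffice). There is no assumption of the form $E(\Omega')\leq E(\Omega)+\kappa|\Omega\setminus\Omega'|$ for $\Omega'\subseteq\Omega$, and indeed such monotonicity is exactly what fails for the rigid crystal.

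The paper closes this step without any equicontinuity, by exploiting that the subset in \textbf{(A4)} can be taken to range over a \emph{positive-measure family} of scaled copies of $\triangle$ itself: for $\eta>0$ small there is a fixed neighborhood $A=B(0,r)\times W$ of the identity in $G$ with $g\ell(1-\eta)\triangle\subset\ell\triangle$ for all $g\in A$, so \textbf{(A4)} gives $E(\bar g\ell\triangle)\leq E(\bar g g\ell(1-\eta)\triangle)+C\eta|\ell\triangle|+o(|\ell\triangle|)$ for every $g\in A$. Averaging this over $g\in A$ replaces the single unknown quantity $E(\Omega_*)$ by the \emph{average} of $e_{(1-\eta)\ell}$ over the set $\bar gA$ of fixed positive Haar measure, and that average converges to $\bar e$ by the $L^1$ convergence you have already established. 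Letting $\ell\to\ii$ and then $\eta\to0$ gives $\limsup_{\ell\to\ii}\sup_{\bar g}e_\ell(\bar g)\leq\bar e$. So your overall architecture is right, but the final step should bound the pointwise value by a local average of $e_{(1-\eta)\ell}$ rather than attempt a two-sided modulus of continuity; as written, that step does not go through.
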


\subsection{Idea of the proof of \textbf{(A1)}--\textbf{(A5)} for the crystal}
Before switching to the abstract case of a general sequence $\{\Omega_n\}$, we give an idea of the proof of \textbf{(A1)}--\textbf{(A5)} in the crystal case. We apply the theory of the previous section to both the ground state energy and the free energy of the crystal which were defined in Section \ref{sec:def_crystal}. First \textbf{(A1)} and \textbf{(A3)} are obvious. Property \textbf{(A2)} is the stability of matter as stated in Theorem \ref{thm:stability}. On the other hand \textbf{(A5)} is essentially the Graf-Schenker inequality \eqref{GS_ineq}, up to some localization issues of the kinetic energy which essentially have already been delt with by Graf and Schenker \cite{GS}.

\begin{figure}
\centering
\includegraphics[width=7cm]{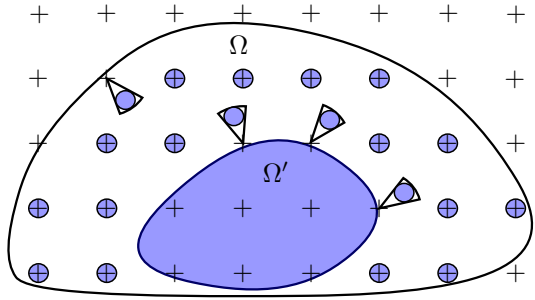}
\caption{Idea of the proof of \textbf{(A4)} for the crystal.}
\label{fig:proof_A4}
\end{figure}

For the crystal the most difficult property is \textbf{(A4)}. The difficulty arises from the fact that this is a very rigid system. For the two other examples mentioned in Section \ref{sec:examples}, \textbf{(A4)} is obvious, the energy being nonincreasing: $E(\Omega)\leq E(\Omega')$. This is because we can simply choose a ground state of $\Omega'$ as a test for $\Omega$ and take the vacuum in $\Omega\setminus\Omega'$. In the crystal case we always have nuclei in $\Omega\setminus\Omega'$ and if we do not put any electron to screen them, they will create an enormous electrostatic energy.

The idea of the proof of \textbf{(A4)} for the crystal is displayed in Figure \ref{fig:proof_A4}. We build a test state in $\Omega$ by considering the ground state in $\Omega'$, and placing one radial electron in a ball of fixed size on top of each nucleus ouside $\Omega'$. By Newton's theorem, the electrostatic potential out of the support of the electron will vanish, hence the energy will simply be $E(\Omega')$ plus the sum of the kinetic energies of the electrons, which is bounded above by a constant times $|\Omega\setminus\Omega'|$ for regular domains. The only problem is that we cannot put an electron on top of the nuclei which are too close to the boundary of $\Omega$ or of $\Omega'$. For these nuclei, using the cone property we can place the ball aside and create a dipole. The difficult task is then to compute a bound on the total interaction between the dipoles and the ground state in $\Omega'$. We prove \cite{2} that it is $o(|\Omega|)$, using a specific version of stability of matter.

\subsection{General domains and strong subadditivity of entropy}
In the previous two subsections, we have presented our abstract theory giving the thermodynamic limit of special sequences built upon the reference set $\triangle$, and we have explained how to apply it to the crystal case. 
For all regular domain sequences we can only get from \textbf{(A5)} a bound of the form
$$\liminf_{n\to\infty}\frac{E(\Omega_n)}{|\Omega_n|}\geq \bar e.$$
In order to get the upper bound, we use a big simplex $L_n\triangle$ of the same size as $\Omega_n$ and a tiling made with simplices of size $\ell_n\ll L_n$, as shown in Figure \ref{fig:general}. We use the ground state of the big simplex $L_n\triangle$ to build a test state in $\Omega_n$, hence giving the appropriate upper bound. To this end, we need some localization features, hence more assumptions in the general theory.

\begin{figure}
\centering
\includegraphics[width=7cm]{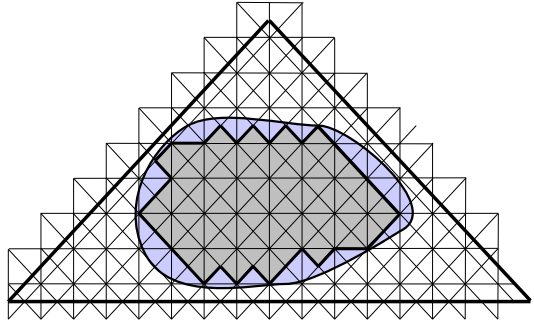}
\caption{Proof for general sequences $\{\Omega_n\}$.}
\label{fig:general}
\end{figure}

It is sufficient \cite{1} to assume that

\smallskip

\noindent $(i)$ $\triangle$ can be used to build a tiling of $\R^3$;

\smallskip

\noindent $(ii)$ the free energy is essentially ``two-body''\footnote{We could as well assume that the energy is $k$-body with $k<\ii$ but this would complicate the assumptions further more.} such that we may write the total energy $E(L_n\triangle)$ as the sum of the energies of the small sets of the tiling, plus the interaction between them and the relative entropy;

\smallskip

\noindent $(iii)$ the entropy is strongly subadditive.

\smallskip

\noindent This is summarized in the following assumption.
We assume that $\Gamma$ is a subgroup of $G$ yielding a tiling of $\R^3$ by means of $\triangle$, i.e. $\overline{\cup_{\mu\in\Gamma} \mu\triangle} =\R^3$ and $\mu\triangle\cap\nu\triangle=\emptyset$ for $\mu\ne\nu$.

\medskip

\noindent \textbf{(A6)} \textsl{(Two-body decomposition).} For all $L$ and $\ell$ we can find $g\in G$ and maps $E_g:\Gamma\to\R$, $I_g:\Gamma\times \Gamma\to\R$, $s_g:\{\cP: \cP\subseteq\Gamma\}\to\R$ such that 

\smallskip

\noindent$\bullet$\; $E_g(\mu)=I_g(\mu,\nu)=0$ if $\ell g\mu\triangle\cap (L\triangle)=\emptyset$;

\smallskip

\noindent$\bullet$\; $\displaystyle E(L\triangle)\geq 
\sum_{\mu\in\Gamma}E_g(\mu)+
\frac{1}{2}\sum_{\mu,\nu\in\Gamma\atop\mu\ne\nu}
I_g(\mu ,\nu) -s_g(\Gamma)-|L\triangle|\alpha(\ell)$;

\smallskip

\noindent$\bullet$\; For all $\cP\subseteq\Gamma$ and $A_\cP=L\triangle\cap\bigcup_{\mu\in\cP}\ell g\mu \triangle$
$$
E(A_\cP)\leq 
\sum_{\mu\in\cP}E_g(\mu)+
\frac{1}{2}\sum_{\mu,\nu\in\cP\atop \mu\ne\nu}
I_g(\mu ,\nu)
 -s_g(\cP)+|A_\cP|\alpha(\ell);
$$

\smallskip

\noindent$\bullet$\; \textsl{(Strong subadditivity).}
for any disjoint subsets $\cP_1$, $\cP_2$, $\cP_3\subseteq\Gamma$ 
$$
s_g(\cP_1\cup\cP_2\cup\cP_3)+s_g(\cP_2)\leq
s_g(\cP_1\cup\cP_2)+s_g(\cP_2\cup\cP_3);
$$

\smallskip

\noindent$\bullet$\; \textsl{(Subaverage property).}
$\displaystyle \int_{G/\Gamma}dg\sum_{\mu,\nu\in\Gamma \atop \mu\ne\nu}I_g(\mu,\nu)
\geq -|L\triangle|\alpha(\ell).$

\medskip

In the applications\footnote{Due to some localization issues of the kinetic energy, it is often needed that the sets of the tiling slightly overlap. See \cite{1} for a generalization in this direction.} the previous quantities are interpreted as follows: $E_g(\cP)$ is the free energy in the union $A_\cP=(L\triangle)\cap \cup_{\mu\in\cP}\ell g\mu \triangle$, $I_g(\mu ,\nu)$ is the interaction energy between the simplices $\ell g\mu \triangle$ and $\ell g\nu \triangle$, and $s_g(\cP)$ is the difference between the entropy of $A_\cP$ and the sum of the entropies of $\ell g\mu \triangle$ with $\mu\in\cP$. 

Conjectured by Lanford and Robinson \cite{LR} the strong subadditivity (SSA) of the entropy in the quantum mechanical case was proved by Lieb and Ruskai\cite{LR1,LR2}. The fact that SSA is very important in the thermodynamic limit was remarked by Robinson and Ruelle\cite{RR} and others\cite{Wehrl}.
In the article\cite{1} we prove the following 
\begin{theorem}[Abstract Limit for general domains \cite{1}]\label{thm:limit_general}
Assume $E:\cM\to\R$ satisfies the properties \textbf{(A1)--(A6)} for some open convex polyhedron $\triangle\in\cR_{a,\epsilon}$ with $0\in\triangle$, yielding a tiling of $\R^3$. Then we have for all sequences $\{\Omega_n\}\subset\cR_{a,\epsilon}$ with $|\Omega_n|\to\ii$ and $|\Omega_n|^{-1/3}{\rm diam}(\Omega_n)\leq C$,
$$\lim_{n\to\ii}\frac{E(\Omega_n)}{|\Omega_n|}=\bar e$$
where $\bar e$ is the limit obtained in Theorem \ref{thm:limit_simplex}.
\end{theorem}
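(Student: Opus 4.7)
The claim splits into a $\liminf$ half, which rests only on Theorem~\ref{thm:limit_simplex} and the subaverage property \textbf{(A5)}, and a $\limsup$ half, where \textbf{(A6)} and strong subadditivity come in. For the $\liminf$, apply \textbf{(A5)} to $\Omega=\Omega_n$ with a scale $\ell_n\to\infty$ satisfying $\ell_n|\Omega_n|^{-1/3}\to 0$. Decompose the $G$-integral into \emph{interior} shifts $g$, for which $g\ell_n\triangle\subseteq\Omega_n$ and Theorem~\ref{thm:limit_simplex} yields $|\ell_n\triangle|^{-1}E(g\ell_n\triangle)\to\bar e$ uniformly in $g$, and \emph{boundary} shifts, where stability \textbf{(A2)} bounds $E(\Omega_n\cap g\ell_n\triangle)$ below by $-\kappa|\ell_n\triangle|$. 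The $a$-regularity of $\partial\Omega_n$ constrains the $\lambda$-measure of boundary shifts to be $O(|\Omega_n|^{2/3}\ell_n)$, so after division by $|\Omega_n|$ the boundary contribution is $o(1)$; and since $|\Omega_n|_{\rm r}\leq c(a,\varepsilon)|\Omega_n|$, the tail error $|\Omega_n|_{\rm r}\alpha(\ell_n)/|\Omega_n|$ also vanishes, giving $\liminf E(\Omega_n)/|\Omega_n|\geq\bar e$.

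For the $\limsup$, use the diameter control and the fact that $\triangle$ is an open convex neighbourhood of $0$ to translate $\Omega_n$ so that $\Omega_n\subseteq L_n\triangle$ with $|L_n\triangle|\leq C_0|\Omega_n|$. Pick $\ell_n\to\infty$ with $\ell_n/L_n\to 0$ and apply \textbf{(A6)} at scales $L_n,\ell_n$. For a shift $g\in G/\Gamma$ of the tiling let $\cP_n(g)=\{\mu\in\Gamma:\ell_n g\mu\triangle\subseteq\Omega_n\}$ and $\cP_n^{\rm c}(g)=\{\mu\in\Gamma\setminus\cP_n(g):\ell_n g\mu\triangle\cap L_n\triangle\neq\emptyset\}$. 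Summing the \textbf{(A6)} upper bounds on $E(A_{\cP_n(g)})$ and $E(A_{\cP_n^{\rm c}(g)})$ and subtracting the \textbf{(A6)} lower bound on $E(L_n\triangle)$, strong subadditivity taken with $\cP_1=\cP_n(g)$, $\cP_2=\emptyset$, $\cP_3=\cP_n^{\rm c}(g)$ renders the entropy combination nonpositive and yields
\[
E(A_{\cP_n(g)})+E(A_{\cP_n^{\rm c}(g)})\;\leq\;E(L_n\triangle)\;-\sum_{\mu\in\cP_n(g),\,\nu\in\cP_n^{\rm c}(g)}I_g(\mu,\nu)+2|L_n\triangle|\alpha(\ell_n).
\]
Averaging over $g\in G/\Gamma$, the cross-interaction absorbs into $|L_n\triangle|\alpha(\ell_n)$ via the subaverage property of \textbf{(A6)} applied to the full $\sum_{\mu\neq\nu}I_g$ together with the stability lower bound on $E(A_{\cP_n^{\rm c}(g)})$. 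Combined with $E(L_n\triangle)/|L_n\triangle|\to\bar e$ from Theorem~\ref{thm:limit_simplex}, a good shift $g_n$ satisfies $E(A_{\cP_n(g_n)})\leq\bar e|\Omega_n|+o(|\Omega_n|)$; the regular boundary makes $|\Omega_n\setminus A_{\cP_n(g_n)}|=O(\ell_n|\Omega_n|^{2/3})=o(|\Omega_n|)$, so \textbf{(A4)} completes the argument, giving $E(\Omega_n)\leq\bar e|\Omega_n|+o(|\Omega_n|)$.

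The main obstacle is twofold. First, there is an apparent volume loss $(C_0-1)|\Omega_n|$ from the fact that $|L_n\triangle|$ strictly exceeds $|\Omega_n|$: naive book-keeping would produce a limit $\bar eC_0+\kappa(C_0-1)$ rather than $\bar e$, and recovering the sharp value requires exploiting the averaging over $g\in G/\Gamma$ jointly with the subaverage property of the two-body interactions, so that the energetic contribution of the buffer zone $L_n\triangle\setminus\Omega_n$ can be seen to be already counted inside $E(L_n\triangle)$. Second, the application of \textbf{(A4)} with $\Omega'=A_{\cP_n(g_n)}$ is delicate: a bare union of tiles generically does not lie in any $\cR_{a',\varepsilon'}$ (thin necks and re-entrant corners) and need not satisfy $d(\partial\Omega_n,\partial\Omega')>\delta$. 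One must replace $A_{\cP_n(g_n)}$ by a suitably shrunk and smoothed regular inner approximation, and verify that this regularisation does not degrade the $o(|\Omega_n|)$ error control built up by the tiling decomposition.
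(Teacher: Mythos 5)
Your overall strategy coincides with the one announced in the paper: the $\liminf$ from \textbf{(A5)} with an interior/boundary splitting of the $G$-integral (the same splitting used in the proof of Theorem~\ref{thm:limit_simplex}), and the $\limsup$ by enclosing $\Omega_n$ in a large simplex $L_n\triangle$, tiling at scale $\ell_n\ll L_n$, combining the inequalities of \textbf{(A6)} with strong subadditivity (taking $\cP_2=\emptyset$), and finishing with \textbf{(A4)}. The $\liminf$ half is correct, and so is the SSA algebra leading to your displayed inequality. But the $\limsup$ half as you execute it does not close: the two ``obstacles'' you list at the end are not side remarks, they are the actual content of the proof (the ``careful estimate of the energy and the interaction energies of boundary terms'' the paper alludes to), and you leave them unresolved.

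Concretely. \emph{(i) The buffer zone.} Bounding $E(A_{\cP_n^{\rm c}(g)})$ below by stability gives $-\kappa|A_{\cP_n^{\rm c}(g)}|$ and hence, as you note yourself, a limit $\bar e C_0+\kappa(C_0-1)$; since a ball cannot be enclosed in a simplex with $C_0$ close to $1$, this is fatal, and ``exploiting the averaging jointly with the subaverage property'' is not a proof. The fix is to replace stability by $E(A_{\cP_n^{\rm c}(g)})\geq \bar e\,|A_{\cP_n^{\rm c}(g)}|-o(|\Omega_n|)$, obtained by running your own $\liminf$ argument (\textbf{(A5)} plus the uniform convergence of Theorem~\ref{thm:limit_simplex}) on the union of tiles $A_{\cP_n^{\rm c}(g)}$ at a third, much smaller scale: its regularized volume is at most $|L_n\triangle|$ and its boundary area is $O(|\Omega_n|^{2/3})$ by $a$-regularity of $\partial\Omega_n$, so the boundary shifts are negligible. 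Then $E(L_n\triangle)-E(A_{\cP_n^{\rm c}(g)})\leq \bar e\,|A_{\cP_n(g)}|+o(|\Omega_n|)$ with no volume loss. \emph{(ii) The cross-interaction.} The subaverage property in \textbf{(A6)} controls only the full double sum $\sum_{\mu\neq\nu}I_g(\mu,\nu)$; you cannot extract the off-diagonal block $\sum_{\mu\in\cP,\nu\in\cP^{\rm c}}I_g(\mu,\nu)$ from it without upper bounds on the two diagonal blocks, which \textbf{(A1)--(A6)} do not supply (the third bullet gives lower, not upper, bounds on $\frac12\sum_{\mu\neq\nu\in\cP}I_g$). This step genuinely needs the per-subset form of the averaged Graf--Schenker estimate used in \cite{1}, and asserting that it ``absorbs into $|L_n\triangle|\alpha(\ell_n)$'' is a gap. \emph{(iii) The final \textbf{(A4)} step} requires $\Omega'\in\cR_{a',\epsilon'}$ and $d(\partial\Omega_n,\partial\Omega')>\delta$; you flag the problem but offer no remedy. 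One must keep only the tiles at distance at least $\delta+{\rm diam}(\ell_n\triangle)$ from $\partial\Omega_n$ (costing $O(\ell_n|\Omega_n|^{2/3})$ in volume) and work in a regularity class containing finite unions of tiles --- which is precisely why \cite{1} replaces $\cR_{a,\epsilon}$ by a generic class $\cR$ adapted to each application.
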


The proof of Theorem \ref{thm:limit_general} is based on a careful estimate of the energy and the interaction energies of boundary terms, ie. of the sets $\ell g\mu \triangle$ which intersect the boundary of the big set $L\triangle$. The application to the crystal is not much more difficult than for Theorem \ref{thm:limit_simplex}. Indeed in the paper of Graf and Schenker \cite{GS}, \eqref{GS_ineq} was expressed using a tiling of $\R^3$ and the last subaverage property of \textbf{(A6)} essentially follows from their ideas \cite{GS}. Strong subadditivity of the entropy is usually expressed via partial traces. A generalization in the setting of localization in Fock space is detailed in our article \cite{2}.

\subsection{Proof of Theorem \ref{thm:limit_simplex}}\label{sec:proof}
Denote as in the Theorem $e_\ell(g)=E(g\ell\triangle)|\ell\triangle|^{-1}$. Notice that \textbf{(A2)}, \textbf{(A4)} with $\Omega'=\emptyset$, and \textbf{(A1)} imply that $e_\ell$ is uniformly bounded on $G$. Also we have by \textbf{(A3)} $e_\ell(u+z,R)=e_\ell(u,R)$ for all $(u,R)\in\R^3\times SO_3(\R)$, $z\in\Z^3$, i.e. $e_\ell$ is periodic with respect to translations. Hence it suffices to prove the theorem for $g=(u,R)\in [0,1]^3\times SO_3(\R)$.

Next we take $\bar g\in G$, $L\gg\ell$ and apply \textbf{(A5)} with $\Omega=\bar g L\triangle$. We get
$$e_L(\bar g)\geq \frac{1-\alpha(\ell)}{|L\triangle|}\int_G \frac{E(\bar g L\triangle\cap g\ell\triangle)}{|\ell\triangle|}dg-\alpha(\ell).$$
Let us introduce the set $\cZ$ of points $z\in\Z^3$ such that $R\ell\triangle+u+z\subset \bar g L\triangle$ for all $u\in[0,1]^3$ and all $R\in SO_3(\R)$. We also define $\partial\cZ$ as the set of points $z\in\Z^3$ such that $(R\ell\triangle+u+z)\cap \bar g L\triangle\neq\emptyset$ for some $(u,R)\in[0,1]^3\times SO_3(\R)$ but $z\notin \cZ$. We obtain using \textbf{(A1)} and \textbf{(A3)}
\begin{eqnarray*}
\int_G \frac{E(\bar g L\triangle\cap g\ell\triangle)}{|\ell\triangle|}dg & = & \sum_{z\in\Z^3}\int_{[0,1]^3}du\int_{SO_3(\R)}dR \frac{E(\bar g L\triangle\cap (R\ell\triangle+u+z))}{|\ell\triangle|}\\
 &=& \!\sum_{z\in\partial\cZ}\int_{[0,1]^3}du\int_{SO_3(\R)}dR \frac{E(\bar g L\triangle\cap (R\ell\triangle+u+z))}{|\ell\triangle|}\\
 & & \qquad\qquad+(\#\cZ)\int_{[0,1]^3}du\int_{SO_3(\R)}dR\ e_\ell(u,R).
\end{eqnarray*}
Using the stability property \textbf{(A2)}, we infer 
$$\frac{E(\bar g L\triangle\cap (R\ell\triangle+u+z))}{|\ell\triangle|}\geq -\kappa\frac{|\bar g L\triangle\cap (R\ell\triangle+u+z)|}{|\ell\triangle|}\geq-\kappa.$$ 
Hence
$$\int_G \frac{E(\bar g L\triangle\cap g\ell\triangle)}{|\ell\triangle|}dg \geq (\#\cZ)\int_{[0,1]^3\times SO_3(\R)}\ e_\ell(g)\,dg+\kappa(\#\partial\cZ).$$
As $\triangle$ has an $a$-regular boundary, it can be seen that $(\#\partial\cZ)\leq CL^2\ell$ and $\#\cZ=|L\triangle|+O(L^2\ell)$. Using again that $e_\ell$ is bounded, we eventually obtain the estimate
$$e_L(\bar g)\geq\int_{[0,1]^3\times SO_3(\R)}\ e_\ell(g)\,dg-C(\alpha(\ell)+\ell/L)$$
for some constant $C$. 
It is then an easy exercise to prove that 
$$\lim_{\ell\to\ii}\inf_{G}e_\ell=\lim_{\ell\to\ii}\int_{[0,1]^3\times SO_3(\R)}\ e_\ell:=\bar e$$ 
and finally that $e_\ell\to\bar e$ in $L^1([0,1]^3\times SO_3(\R))$.

The last step consists in proving the uniform convergence, using \textbf{(A4)}. Fix some small $\eta>0$. As $0\in\triangle$ and $\triangle$ is convex, we have $(1-\eta)\triangle\subset\triangle$. More precisely, there exists an $r>0$ and a neighborhood $W$ of the identity in $SO_3(\R)$ such that $R(1-\eta)\triangle+u\subset\triangle$ for all $(u,R)\in A:=B(0,r)\times W\subset G$. We have that $g\ell(1-\eta)\triangle\subset\ell\triangle$ for all $g\in A_\ell:=B(0,r\ell)\times W$, hence in particular for all $g\in A$. Now we fix some $\bar g\in G$ and apply \textbf{(A4)} with $\Omega=\bar g\ell\triangle$ and $\Omega'=\bar g g\ell(1-\eta)\triangle$, we get
$$E(\bar g\ell\triangle)\leq E(\bar g g\ell(1-\eta)\triangle)+C|\ell\triangle|\eta+o(|\ell\triangle|).$$
Integrating over $g\in A$ and dividing by $|\ell\triangle|$ we infer
$$e_\ell(\bar g)\leq \frac{1}{|\bar gA|}\int_{\bar g A}e_{(1-\eta)\ell}(g)\,dg+C\eta+o(1)_{\ell\to\ii}.$$
First we pass to the limit as $\ell\to\ii$ using that $e_\ell\to\bar e$ in $L^1(G)$ and $|A|\neq0$. Then we take $\eta\to0$ and get $\limsup_{\ell\to\ii}\sup_{\bar g\in G}e_\ell(\bar g)\leq \bar e$. This ends the proof of Theorem \ref{thm:limit_simplex}.
\qed

\bigskip

\noindent\textbf{Acknowledgement.} M.L. acknowledges support from the ANR project ``ACCQUAREL'' of the French ministry of research.

\addcontentsline{toc}{section}{References}
\bibliographystyle{amsplain}

\end{document}